\theoremstyle{definition}
\newcommand{\eps}[0]{\varepsilon}
\newcommand{\seq}[1]{\ensuremath{(#1_i)_{i=0}^n}}
\title{Simply Realising an Imprecise Polyline is NP-hard}
\author{Thijs van der Horst}{Department of Information and Computing Sciences, Utrecht University, the Netherlands\and Department of Mathematics and Computer Science, TU Eindhoven, the Netherlands}{t.w.j.vanderhorst@uu.nl}{}{}
\author{Tim Ophelders}{Department of Information and Computing Sciences, Utrecht University, the Netherlands\and Department of Mathematics and Computer Science, TU Eindhoven, the Netherlands}{t.a.e.ophelders@uu.nl}{}{}
\author{Bart van der Steenhoven}{Department of Mathematics and Computer Science, TU Eindhoven, the Netherlands}{b.j.v.d.steenhoven@student.tue.nl}{}{}
\authorrunning{T. van der Horst, T. Ophelders, and B. van der Steenhoven}
\keywords{Imprecise polylines, Simple curves, Perturbation}
\begin{document}
\maketitle
\begin{abstract}
    We consider the problem of deciding, given a sequence of regions, if there is a choice of points, one for each region, such that the induced polyline is simple or weakly simple, meaning that it can touch but not cross itself.
    Specifically, we consider the case where each region is a translate of the same shape.
    We show that the problem is NP-hard when the shape is a unit-disk or unit-square.
    We argue that the problem is NP-complete when the shape is a vertical unit-segment.
\end{abstract}

    \section{Introduction}
        Finding a planar drawing of a graph has been studied widely due to its many applications, for example in floor-planning and visualisations~\cite{planargraphbook}.
        Often, when the to be drawn graphs arise from geometric data, such as road networks, each vertex already has a predetermined location.
        In this case, if we insist on drawing edges as straight line segments, there is a unique corresponding drawing.
        Real-world data such as GPS trajectories are often imprecise.
        Imprecision is often modeled by assigning to each vertex an imprecision region, and an assignment of each vertex to a point in its imprecision region is called a \emph{realisation}.
        
        Algorithms for imprecise data have been considered in the past~\cite{dataimprecisionthesis, structimprecise}. 
        A particularly relevant setting for us is when we know that the underlying data has no self-intersections.
        We can then look for a realisation whose drawing also has no self-intersections.
        Godau~\cite{embeddinginaccuracies} showed that this problem is NP-hard when the imprecision regions are closed disks of varying radius.
        In follow-up work, Angelini et al.~\cite{anchoredgraphs} showed that the problem remains NP-hard for straight-line drawings when the regions are all equal-sized disks, diamonds or squares under the requirement that these regions do not overlap. 
        In this work, we consider this problem in a very restricted graph class, namely path-graphs.
        That is, we are looking for a realisation that is a simple polyline.
        Algorithms exist to determine whether a polygon or polyline is weakly simple~\cite{recogwspol, detectwspol}.
        The most relevant related work is that of Löffler~\cite{imprecisetours}, who proved the NP-hardness of determining whether a realisation exists that is a weakly simple polygon in the case that the imprecision regions are scaled translates of a fixed shape, such as a segment or circle.
        Finally, Silveira et al.~\cite{noncrossingpaths} showed that the problem is NP-complete for straight-line drawings of graphs where each vertex has degree exactly one (i.e. the graph is a matching) using unit-length vertical segments as imprecision regions.
        
        \subparagraph*{Definitions and problem statement.}
        A \emph{polygonal chain} or \emph{polyline} is a piecewise-linear path in the plane, and is given by a sequence $P=\seq{p}$ of points in the plane, called vertices.
        The pieces of $P$ are the $n$ segments connecting consecutive vertices.
        With slight abuse of notation, we interpret $P$ as the polyline, instead of its vertices.
        An \emph{imprecise polyline} is given by a sequence of regions $R=\seq{r}$ in the plane.
        We call the polyline $\seq{p}$ a \emph{realisation} of $R$ if $p_i\in r_i$ for each $i$.
        
        A polyline is \emph{simple} (or plane) if it does not cross or touch itself.
        An \emph{$\eps$-perturbation} of a polyline $P=(p_i)_{i=0}^n$ is a polyline $P'=(p'_i)_{i=0}^n$ such that $\|p_i-p'_i\|\leq\varepsilon$ for all $i$.
        A polyline $(p_i)_{i=0}^n$ is \emph{weakly simple} if for every $\eps > 0$ there is an $\eps$-perturbation that is simple.
        We are interested in the problem of deciding whether an imprecise polyline $R$ admits a weakly simple realisation.
        
        % We are interested in the following problem: 
        % Given a sequence of regions in the plane, does there exist a point in each region such that the induced polyline on those points is weakly simple. For a sequence of regions, we call a polyline induced on a set of points such that we have one point for each region a \emph{realisation}. The polyline or realisation is \emph{simple} if all points have distinct positions and all segments are disjoint. The polyline or realisation is \emph{weakly simple} if for all $\varepsilon > 0$ the points can be perturbed distance at most $\varepsilon$ so that the resulting polyline is simple. 

        \subparagraph*{Results and organisation.} 
        We show that the problem is NP-hard already when each region is a translate of the same shape $S$.
        In \Cref{app:circles} we prove this for the case where $S$ is a unit disk, and discuss how the construction can be adapted to show NP-hardness for unit squares.
        It follows that it is hard to compute the infimum $\eps$ for which a polyline has a simple $\eps$-perturbation under the $L_1$, $L_2$, and $L_\infty$ norms.
        In \Cref{app:vertsegs} we show NP-completeness when $S$ is a unit-length vertical segment.
    
    \section{Unit disks as regions} \label{app:circles}
        We prove by reduction from planar monotone 3SAT that the problem is NP-hard when all regions are unit disks.
        We first introduce planar monotone 3SAT.
        A monotone 3CNF formula is a formula~$\varphi(x_1,\dots,x_m):=\bigwedge_i C_i$ with Boolean variables $x_1,\dots,x_m$, where each clause $C_i$ is either positive (of the form $(x_j\lor x_k\lor x_l)$) or negative (of the form $(\neg x_j\lor\neg x_k\lor\neg x_l)$).
        A monotone rectilinear layout of a monotone 3CNF formula is a plane drawing of the incidence graph between its clauses and variables, in which variables are drawn as disjoint horizontal segments on the $x$-axis, all positive clauses lie above the $x$-axis, all negative clauses lie below the $x$-axis, and edges are rectilinear, do not cross the $x$-axis, and have at most one bend. We may assume for each clause that its central edge does not bend, see \Cref{afig:plan3sat-vb}.
        Planar monotone 3SAT is an NP-complete problem that asks, given a monotone 3CNF formula $\varphi$ with a given monotone rectilinear layout, whether $\varphi$ is satisfiable.
        For our reduction we construct gadgets consisting of imprecise polylines corresponding to variables, clauses, and edges of the layout.
        We then show how to connect these gadgets into a single imprecise polyline that has a weakly simple realisation if and only if the formula $\varphi$ if satisfiable.
        
        \begin{figure}[ht]
            \centering
            \includegraphics{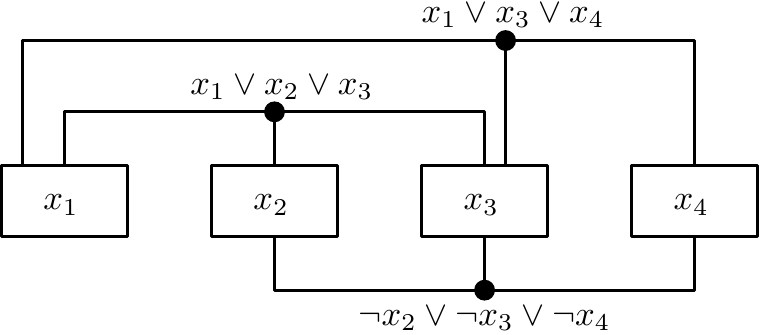}
            \caption{Example of an induced variable-clause graph for planar monotone 3SAT.}
            \label{afig:plan3sat-vb}
        \end{figure}
        
        \subparagraph*{Pivot gadget.}
        \begin{figure}[b]
            \centering%
            \subcaptionbox{\label{afig:x-gadget-og}}[.5\textwidth]%
                {\includegraphics[page=1]{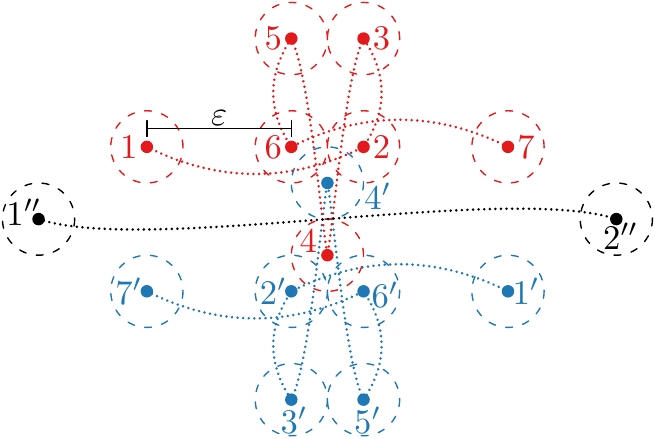}}%
            \subcaptionbox{\label{afig:x-gadget-straight}}[.5\textwidth]%
                {\includegraphics[page=2]{X-gadget/X-gadget}}%
            \caption{\subref{afig:x-gadget-og} The pivot gadget. \subref{afig:x-gadget-straight} A weakly simple realisation, showing a simple perturbation. Note that this perturbation moves most vertices outside of their assigned disks. }
            \label{afig:x-gadget}
        \end{figure}
        Before we construct the gadgets for the variables, clauses and edges, we introduce an auxiliary gadget: the \emph{pivot gadget}.
        The purpose of the pivot gadget is to force a given edge of any weakly simple realisation to go through some fixed point of our choosing.
        The gadget is illustrated in \Cref{afig:x-gadget} and consists of three components.
        Two of these components together form the pivot so that the edge of the third component must pass through the center of the pivot.
        If we wish to force this edge to pass through $(0,0)$, the coordinates of the top component will be
        \[1, \ldots, 7 \mapsto (-1-\eps,2) , (1,2) , (1,5) , (0,-1) , (-1,5) , (-1,2) , (1+\eps,2)\] 
        for some value $\eps>0$, and the bottom component is a rotated copy of the top component:
        \[1', \ldots, 7' \mapsto (1+\eps,-2) , (-1,-2) , (-1,-5) , (0,1) , (1,-5) , (1,-2) , (-1-\eps,-2)\text{.}\] 
        There is some freedom in choosing the placement of regions $1''$ and $2''$, but to make sure that the realisation passes through the pivot we require that the x-coordinate of region $1''$ is less than -1, that of region $2''$ is greater than 1, and the tangent lines to the pair of circles $1''$ and $2''$ all intersect the segment between $(0,-5)$ and $(0,5)$.
        We can accommodate any angle of the edge between regions $1''$ and $2''$ by choosing $\eps>0$ sufficiently small. Namely, such that the point $(-\eps, 2)$ where we will realise region 1 lies above the tangent lines from point $(0,0)$ to region $1''$, and such that symmetric properties hold for regions 7, $1'$ and $7'$. 

        \Cref{afig:x-gadget-straight} depicts a weakly simple realisation of the pivot gadget.
        Other weakly simple realisations may differ in placement for points in regions 1, 7, $1'$, $7'$, $1''$ and $2''$ only, but the segment connecting $1''$ and $2''$ must always pass through $(0,0)$.
        
        \begin{lemma} \label{alem:gate}
            For any weakly simple realisation, the segment of the pivot gadget connecting regions $1''$ and $2''$ must pass through $(0, 0)$. 
        \end{lemma}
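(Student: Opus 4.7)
The plan is to establish the lemma in two stages: first, that vertices 4 and $4'$ must both be realised at $(0, 0)$ in any weakly simple realisation, and second, that this forces the $1''$-$2''$ segment to pass through $(0, 0)$ as well.

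The foundational observation is that any transversal crossing of two straight edges in the polyline whose endpoints are not shared persists under every sufficiently small $\eps$-perturbation, because moving each
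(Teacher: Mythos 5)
Your proposal is only an outline: it announces a two-stage plan and begins a preliminary observation about crossings persisting under small perturbations, but it never carries out either stage, so as written there is no proof here. The two-stage decomposition itself (first pin vertices 4 and $4'$ to $(0,0)$, then deduce that the $1''$--$2''$ edge must pass through that point) does match the structure of the paper's argument, and the opening observation --- that a transversal crossing between edges with no shared endpoint survives every sufficiently small perturbation, so a weakly simple realisation can have no such crossing --- is the correct tool to justify all the non-crossing constraints that follow. But everything substantive is missing.

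Concretely, to complete stage one you would need the following argument (which is the heart of the paper's proof): the realised points of regions 3, 4, 5 form a V-shape whose two edges, extended, separate the plane into half-planes; the edge from region 1 to region 2 cannot cross either arm of the V, and since the point of region 1 necessarily lies on the left side, the point of region 2 must as well; symmetrically the points of regions 6 and 7 force the right side. Given the actual disk placements, the only way to satisfy both constraints simultaneously is to place the points of regions 2 through 6 on the single vertical line $x=0$ that meets all five disks, and likewise for the bottom component. Stage one then concludes by observing that vertex 4 must lie at or above vertex $4'$ on this line (else the $1''$--$2''$ edge is blocked), and the regions of 4 and $4'$ intersect the line $x = 0$ in such a way that this is only possible when both are at $(0,0)$. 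Stage two additionally needs the hypothesis on the tangent lines to the disks $1''$ and $2''$ intersecting the segment from $(0,-5)$ to $(0,5)$, which rules out the $1''$--$2''$ edge passing entirely above or below the gadget; without invoking that assumption the lemma is false. None of these steps appear in your submission, so the proof must be regarded as incomplete.
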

        \begin{proof}
            First we consider the top pivot component individually. If we look at the part of any polyline consisting of only points 3, 4 and 5 then we note that these will always form a V-shape where the angle can be changed by changing the location of the points. From the point of view of points 1, 2, 6 and 7 the lines from 3 to 4 and from 4 to 5 might as well be infinite since with the defined region placements a line from 1 to 2 can never pass above or below these lines. This means the lines from 3 to 4 and from 4 to 5 define (closed) half-planes. In order to avoid intersections points 1 and 2 must be in the same half-plane for every valid placement. Since any point chosen for region 1 will always be in the left half-plane, we know that the point of region 2 must also be to the left. At the same time, similar reasoning tells us that the points of region 6 and 7 must both be in the right half-plane for both lines between 3, 4 and 5. With the given region placements, the only way to make this happen is if the points in region 2, 3, 4, 5 and 6 are all placed on a vertical line, which must be $x=0$ since this is the only line that hits all of these regions. The same holds for the mirrored bottom pivot component. This means that points 4 and $4'$ lie on the same vertical line. We must have that point 4 has y-coordinate greater or equal than points $4'$ because otherwise the line between the $1''$ and $2''$ will always have an intersection. The only realisation where point 4 is not below point $4'$ is one where they are both placed at $(0,0)$. Since the tangent lines to the pair of circles $1''$ and $2''$ all intersect the segment between $(0,-5)$ and $(0,5)$,  any line between these two regions cannot pass completely above or below the pivot gadget. Thus, for any weakly simple realisation, the point $(0,0)$ then becomes the only position where an edge can pass from one side of the pivot gadget to the other without intersections.
        \end{proof}
        
        \begin{lemma}
            Any choice of points for regions $1''$ and $2''$ such that the segment connecting them passes through $(0,0)$ has a corresponding weakly simple realisation of the pivot gadget. 
        \end{lemma}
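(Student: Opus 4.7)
The plan is to exhibit an explicit realisation and then verify weak simplicity through a perturbation argument. Motivated by \Cref{alem:gate} and the configuration shown in \Cref{afig:x-gadget-straight}, I would place the top component's points $2, 3, 4, 5, 6$ on the vertical line $x=0$ at the $y$-coordinates of their disk centers, i.e., point 2 at $(0,2)$, point 3 at $(0,5)$, point 4 at $(0,-1)$, point 5 at $(0,5)$, and point 6 at $(0,2)$. Each of these lies in (on the boundary of) its assigned unit disk, so the placement is a valid realisation. Symmetric placements are used for the bottom component's primed vertices, and points $1, 7, 1', 7'$ are placed inside their disks close to $x=0$ (e.g.\ at $(-\eps',2)$ and $(\eps',2)$ for small $\eps'>0$, and their reflections below). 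Together with the prescribed $p_{1''} \in r_{1''}$ and $p_{2''} \in r_{2''}$, this yields a valid realisation of the whole gadget.

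To establish weak simplicity I would, for each $\eps > 0$, produce an $\eps$-perturbation that is simple. The segments $2$--$3$, $3$--$4$, $4$--$5$, $5$--$6$ all lie along $x=0$ and hence overlap pairwise, so the core of the argument is to bend them off the axis by some $\delta \ll \eps$, alternating the sides of displacement along the zigzag so that the top component becomes a simple path confined to a narrow vertical strip around $x=0$ (precisely the pattern in \Cref{afig:x-gadget-straight}). The analogous perturbation is applied to the bottom component, with the endpoints 1, 7, $1'$, $7'$ shifted horizontally only by an amount less than $\eps$ so that they continue to lie in their disks' $\eps$-neighbourhoods. Since the segment $p_{1''}p_{2''}$ passes through the origin by assumption, the tangent-line conditions imposed on $r_{1''}$ and $r_{2''}$ in the gadget's construction ensure that, for sufficiently small $\delta$, this middle segment threads through the gap at the origin between the perturbed top and bottom components without touching them.

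The main obstacle will be choosing the perturbation directions consistently, so that the zigzag through $2,3,4,5,6$ is resolved into a simple path \emph{and} the central segment does not cross the resulting path on either its entry or its exit side of $x=0$. This reduces to a finite case analysis driven by the sign of the slope of $p_{1''}p_{2''}$: once the side of $x=0$ on which each of the points $2,\dots,6$ is perturbed has been fixed in agreement with this slope (and the mirrored choice is made for the primed points), the remaining non-crossing conditions are direct consequences of continuity and the smallness of $\delta$ relative to the fixed geometric parameters of the disks.
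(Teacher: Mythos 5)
There is a genuine flaw in the realisation you construct: you place point $4$ at the centre $(0,-1)$ of its disk (and, by symmetry, $4'$ at $(0,1)$). With this choice the top component's V-shape $3$--$4$--$5$ dips a full unit \emph{below} the origin and the bottom component's inverted V rises a full unit \emph{above} it, so there is no ``gap at the origin'' for the central segment to thread through --- the origin lies strictly inside both V's. Concretely, the sub-path $3$--$4$--$5$ stays within an arbitrarily thin vertical strip around $x=0$ under any $\eps'$-perturbation, while the segment $p_{1''}p_{2''}$ crosses that entire strip near $y=0$ (its endpoints have $x$-coordinates of fixed sign, bounded away from $0$ once the points in $1''$ and $2''$ are fixed); since point $3$ ends up above that segment and point $4$ at $y\approx -1$ ends up below it, the arm $3$--$4$ must cross it, and no perturbation smaller than roughly $1/2$ can undo this. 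Your own \Cref{alem:gate} already tells you this placement is fatal: it shows that in every weakly simple realisation points $4$ and $4'$ must \emph{both} sit at $(0,0)$, i.e.\ at the topmost point of disk $4$ and the bottommost point of disk $4'$, not at their centres.

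The repair is exactly the paper's construction: put $4$ and $4'$ at $(0,0)$, so that the central segment, the apex of the top V and the apex of the bottom V all meet only at the single point $(0,0)$; a perturbation may then push point $4$ slightly up and $4'$ slightly down (perturbations are allowed to leave the disks, as \Cref{afig:x-gadget-straight} illustrates), opening a genuine gap for the central segment. The rest of your argument is in the same spirit as the paper's --- points $2,3,5,6$ on the line $x=0$, corner points $1,7,1',7'$ placed at $(\mp\eps,\pm 2)$ so that (by the choice of $\eps$ relative to the tangent lines from $(0,0)$ to $1''$ and $2''$) the corner edges stay clear of the central segment, and a small alternating displacement to resolve the overlapping vertical segments. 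Note also that $(-\eps',2)$ for $\eps'<\eps$ does not lie in region $1$, whose rightmost point is $(-\eps,2)$; this is a minor slip, but the tangent-line condition is stated precisely for the point $(-\eps,2)$, so you should use that point.
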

        \begin{proof}
            Observe that, similar as before, the points of regions 2 to 6 and $2'$ to $6'$ can be placed on a vertical line such that an edge from region $1''$ to $2''$ through $(0,0)$ does not intersect any of them. Furthermore, we chose $\eps$ such that the regions 1, 7, $1'$ and $7'$ do not cause any intersections. To see this consider region 1, for which we can choose point $(-\eps, 2)$ in a realisation. We have chosen $\eps$ such that this point lies above the tangent lines from point $(0,0)$ to region $1''$, which means it lies above any edge between region $1''$ and $2''$ that goes through $(0,0)$. Therefore the edge from region 1 to 2 does not intersect the edge from $1''$ to $2''$, and the same holds for the edges from the other three corner points of the gadget. 
        \end{proof}
        
        \subparagraph*{Variable gadget.}
        For the reduction we want to have a variable gadget that can take on two discrete states, one for true and one for false, such that the difference between them can be used to allow for different behaviour. To this end, we devised our variable gadget as seen in \Cref{afig:var-gadget-og}. The exact coordinates and order of the points is as follows, assuming the leftmost point as our $(0, 0)$ and $l$ as an arbitrary value such that $l > 8$:
        \[1, \ldots, 6 \mapsto (0,0) , (8, 0) , (5, 2) , (5, -2) , (2, 0) , (l, 0) \]
        
        \begin{lemma} \label{alem:vargadget}
        There are exactly two distinct weakly simple realisations for the regions of the variable gadget.
        \end{lemma}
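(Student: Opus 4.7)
My plan is to characterise all weakly simple realisations using the rigidity principle: any transversal crossing between two non-adjacent segments is stable under sufficiently small perturbations, so no weakly simple realisation can contain one. Applied to the polyline $p_1p_2p_3p_4p_5p_6$, this principle rules out every realisation but two.

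I would first analyse the non-adjacent segments $p_1p_2$ and $p_5p_6$. Their $x$-projections both cover the interval $[3,7]$ (since $p_1,p_2$ lie in unit disks centred at $(0,0),(8,0)$ and $p_5,p_6$ in unit disks centred at $(2,0),(l,0)$ with $l>8$), while their endpoints are confined to the strip $|y|\leq 1$. For these two nearly-horizontal segments to avoid a transversal crossing over their common $x$-range, they must be vertically separated, and combined with the unit-disk constraints the unique separation places one segment exactly on $y=+1$ (its endpoints being the topmost points of the relevant disks) and the other exactly on $y=-1$. This gives two discrete cases $(A)$ and $(B)$, depending on which of $p_1p_2,p_5p_6$ lies above.

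In case $(A)$, with $p_1p_2$ on $y=1$ and $p_5p_6$ on $y=-1$, I would pin down $p_3,p_4$ next. The segment $p_3p_4$ runs from $p_3$ with $y_3\geq 1$ to $p_4$ with $y_4\leq -1$, so it crosses the line $y=1$ at some point $q$ whose $x$-coordinate lies in $[4,6]\subseteq[0,8]$. Since $p_1p_2$ occupies the entire horizontal segment from $(0,1)$ to $(8,1)$, the point $q$ lies on $p_1p_2$; the intersection is transversal unless $q=p_3$. Hence $p_3$ is on the line $y=1$, and the only such point in its disk is $(5,1)$. The same reasoning at $y=-1$ forces $p_4=(5,-1)$. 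Case $(B)$ yields the mirrored realisation.

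Finally, I would verify that each of the two candidate realisations is weakly simple by constructing, for every $\varepsilon>0$, an explicit simple $\varepsilon$-perturbation. In case $(A)$ I would shift $p_1,p_2$ upward by some $\delta\leq\varepsilon$ and $p_5,p_6$ downward by $\delta$ while leaving $p_3,p_4$ fixed; a short enumeration of non-adjacent segment pairs then confirms that no intersections remain. The main obstacle I anticipate is closing step two against tilted realisations of $p_1p_2$ that still pass through $(5,1)$; this I would handle by noting that a non-horizontal line segment through $(5,1)$ must leave the strip $|y|\leq 1$ on one side, which is impossible because both endpoints are disk-constrained to that strip, so the placement is indeed forced to be horizontal at $y=1$.
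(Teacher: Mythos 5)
Your overall strategy --- forcing the realisation into one of two combinatorial states and then checking each is weakly simple --- matches the paper's, and your step pinning down $p_3$ and $p_4$ is sound. The gap is in your first step. You derive the dichotomy ``one of $p_1p_2,\,p_5p_6$ lies on $y=1$ and the other on $y=-1$'' from the requirement that these two segments avoid crossing \emph{each other}, but that requirement forces no such thing: for instance $p_1=(0,0.9)$, $p_2=(8,0.9)$, $p_5=(2,-0.9)$, $p_6=(l,-0.9)$ gives two disjoint, vertically separated near-horizontal segments lying on neither extreme line. So your cases (A) and (B) are not exhaustive as set up, and step two, which assumes one of them, never gets off the ground.

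The forcing to $y=\pm 1$ actually comes from the interaction with $p_3p_4$, which is how the paper argues. Since $y_3\ge 1$, $y_4\le -1$ and $x_3,x_4\in[4,6]$, the segment $p_3p_4$ crosses the strip $|y|\le 1$ at $x$-coordinates in $[4,6]$, while each of $p_1p_2$ and $p_5p_6$ lies in that strip and spans an $x$-interval containing $[4,6]$; hence each of them must meet $p_3p_4$, and avoiding a transversal crossing forces the intersection to be the endpoint $p_3$ (so $y_3=1$, i.e.\ $p_3=(5,1)$, and since $y$ is affine along $p_1p_2$ with both endpoints at $y\le 1$ and value $1$ at $x=5$, the whole segment lies on $y=1$ with endpoints at the tops of their disks) or, symmetrically, the endpoint $p_4$ (forcing $y=-1$). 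This is essentially your step two, but it must be run \emph{first}, for each of $p_1p_2$ and $p_5p_6$ against $p_3p_4$; only afterwards do you rule out both segments choosing the same side, which is not automatic and is the one configuration the paper also has to exclude explicitly (all of $p_1,p_2,p_5,p_6$ on one horizontal line is not weakly simple because of the overlapping collinear segments and the edge $p_4p_5$ arriving at that line from the other side). Your final verification that the two surviving realisations are weakly simple is fine.
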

        \begin{proof} To see why this is the case first note that for any realisation of the regions, the only way a line from region 5 to 6 does not cross the line from region 3 to 4 is if it goes above or below it. The same holds for the line from region 1 to 2. Due to the placement of the regions, these lines can only go above the line from 3 to 4 if the point of 3 is placed at the very bottom $(5,1)$ of its region and both endpoints of the line going over it are placed at the very top of their regions at $y=1$. A similar property holds for when a line needs to go below the line between 3 and 4. Finally, note that putting the points of regions 1, 2, 5 and 6 all on the same horizontal line does not lead to a weakly simple polyline. 
        \end{proof}
        If the final horizontal line from region 5 to 6 is above the blocking vertical line between regions 3 and 4, then the gadget is in its false state as seen in \Cref{afig:var-gadget-false}. We consider the situation where it is below as the true state (\Cref{afig:var-gadget-true}). 
        
        \begin{figure}[ht]
            \centering
            \subcaptionbox{\label{afig:var-gadget-og}}[.33\textwidth]%
                {\includegraphics{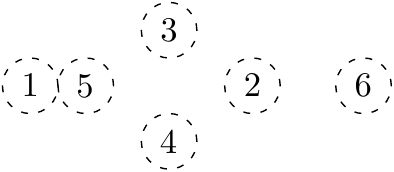}}%
            \subcaptionbox{\label{afig:var-gadget-false}}[.33\textwidth]%
                {\includegraphics{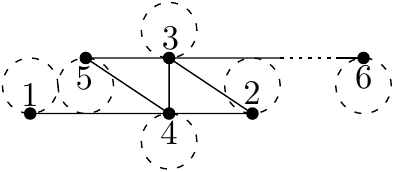}}%
            \subcaptionbox{\label{afig:var-gadget-true}}[.33\textwidth]%
                {\includegraphics{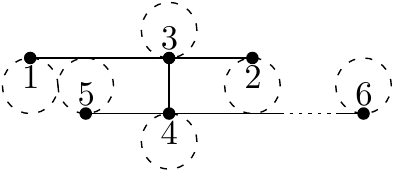}}%
            \caption{\subref{afig:var-gadget-og} The variable gadget. \subref{afig:var-gadget-false} The false state. \subref{afig:var-gadget-true} The true state.}
            \label{afig:variables}
        \end{figure}
        
        \subparagraph*{Clause gadget.} 
        The clause gadget is used to represent the disjunctions of the shape $(x_1 \vee x_2 \vee x_3)$, where the variables can also all be negated. The initial placement of the regions for this gadget can be seen in \Cref{afig:clause-gadget-og}. In these figures, for sake of clarity, we use an X-shape to schematically represent that a pivot gadget is placed at that position. The regions $x_1$, $x_2$ and $x_3$ correspond to the literals of the clause and the regions 1, 2, 3 and 4 are used to put restrictions on these literals. The exact positions of the points of this gadget are as follows, taking the left top point as $(0,0)$:
        \[1, 2, 3, 4 \mapsto (0,0) , (0, -5) , (6, -5) , (6, 0)\]
        We place the point for $x_1$ at $(1,-3)$, $x_2$ at $(2, -4.6)$ and $x_3$ at $(5, -3)$.
        The construction used to connect the regions of these literals to variable gadgets and the exact placement of the pivot gadgets is explained in a later section. For now, it is only necessary to know that when a variable is false, then the part of the polyline going to the literal point can only go completely vertical through the pivot gadget at the bottom, or completely horizontal through the pivot gadgets at the sides. The regions of $x_1$, $x_2$ and $x_3$ are placed precisely so that in this situation there is exactly one realisation that does not intersect the pivot gadget. We shall refer to the position of a variable in this realisation as its \emph{false position} and say that it is \emph{covered} in a solution if it lies in the interior of the polyline defined by the corner points of the clause gadget. If the variable is true then the line can be placed at an angle through the pivot gadget, meaning that the literal point can be be realised in multiple different ways. Specifically we shall call the leftmost, bottommost and rightmost positions of $x_1$, $x_2$ and $x_3$ their respective \emph{true position}. The clause gadget simulates clauses of the Boolean formula according to the following lemma:
        \begin{lemma}
        For every choice of two of the three literals there exists a realisation that uncovers their false positions, and any realisation uncovers at most two false positions.
        \end{lemma}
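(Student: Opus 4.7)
The plan is to handle the two parts of the lemma separately: (i) for each of the three pairs of literals, exhibit a weakly simple realisation that uncovers exactly those two; and (ii) prove that no realisation uncovers all three simultaneously.

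For part (i), I would give an explicit construction for each pair. By the left--right symmetry between $x_1$ and $x_3$ visible in the gadget, it is enough to treat two essentially different cases. For the pair $\{x_1,x_3\}$, place $x_1$ at its leftmost true position (just outside the left side of the rectangle), $x_3$ at its rightmost true position (just outside the right side), and $x_2$ at its false position $(2,-4.6)$. For the pair $\{x_1,x_2\}$, place $x_1$ outside the left, $x_2$ below the bottom, and $x_3$ at its false position inside. In each case, pushing a literal to its true position only perturbs the two polyline segments incident to it, pulling them out through a single side; since the three sides involved are distinct and the corner positions $1,2,3,4$ are fixed, the resulting segments are pairwise disjoint by inspection.

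For part (ii), I would argue by contradiction. Assume that a weakly simple realisation places all three literals at their true positions. By \Cref{alem:gate}, the polyline segment leading from each $x_i$ to its variable gadget is forced through the center of the associated pivot, which is a fixed point outside the clause rectangle on that literal's side. Together with the fixed corner positions and the prescribed traversal order through the clause gadget, this pins the polyline up to a one-parameter choice per literal. The decisive step is to show that the three outward bulges---one through each of the left, bottom, and right pivots---cannot coexist in a simple polyline: whichever cyclic order of corners and literals the polyline realises, the segment leaving the rectangle to visit one uncovered literal must cross either one of the other literal segments or a segment joining two corners, because at least one of the three bulges becomes topologically enclosed by the rest of the clause subpath.

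The main obstacle will be making the impossibility argument in (ii) fully rigorous. Each literal individually can sit outside the rectangle without causing trouble, so the contradiction is combinatorial and uses the global structure: the fixed corner ordering of the clause, the half-plane constraints enforced by \Cref{alem:gate} at each pivot, and the particular locations of the false positions of the three literals. I expect the cleanest route is to fix the cyclic order in which the polyline visits the corners, note that the three pivot centers lie outside the rectangle on three distinct sides, and then show by a short case analysis on the order of the three outward sojourns that any attempt to realise all three forces a crossing between a segment adjacent to $x_2$ and one of the segments adjacent to $x_1$ or $x_3$.
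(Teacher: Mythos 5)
There is a genuine gap, rooted in a misreading of what ``uncovers'' means. In the paper's definition, a false position is a \emph{fixed point} ($(1,-2)$, $(3,-4.6)$ and $(5,-2)$ for $x_1,x_2,x_3$), and it is covered or uncovered according to whether it lies in the interior of the quadrilateral spanned by the four points chosen in the corner regions $1,\dots,4$. The only relevant degrees of freedom are therefore those four corner points; the literal placements, the pivots and the wires play no role in this lemma. Your part (i) exhibits literal placements and ``outward bulges'', whereas what must be exhibited is a choice of corner points: for example, to uncover the false positions of $x_1$ and $x_2$ one takes regions $1$ and $2$ at their rightmost points $(1,0)$ and $(1,-5)$ and region $3$ at its topmost point, so that both $(1,-2)$ and $(3,-4.6)$ fall outside the quadrilateral.

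For part (ii) your argument cannot work, for two reasons. First, the direction is reversed: you assume all three literals sit at their \emph{true} positions and try to derive a crossing, but that configuration is exactly the one that must remain realisable (a clause with all literals true is satisfied); what has to be excluded is that all three \emph{false} positions are simultaneously uncovered. Second, the impossibility is not topological but metric, and hinges on the exact coordinates. Uncovering $(1,-2)$ forces the points of regions $1$ and $2$ (unit disks centred at $(0,0)$ and $(0,-5)$) to be exactly $(1,0)$ and $(1,-5)$: every point of either disk has $x$-coordinate at most $1$, and the left edge of the quadrilateral reaches $x\geq 1$ at height $y=-2$ only if both of its endpoints sit at $x=1$. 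Symmetrically, uncovering $(5,-2)$ forces regions $3$ and $4$ to $(5,-5)$ and $(5,0)$. The bottom edge is then the horizontal segment from $(1,-5)$ to $(5,-5)$, and $(3,-4.6)$ lies strictly above it, hence in the interior of the resulting rectangle, hence covered. Had $x_2$'s false position lain below $y=-5$, all three false positions could be uncovered simultaneously; a case analysis on cyclic orders or ``sojourns'' that ignores these coordinates cannot detect the difference, so that route cannot be made rigorous.
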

        \begin{proof}
        The false positions of the literals $x_1$, $x_2$ and $x_3$ are $(1,-2)$, $(3, -4.6)$ and $(5, -2)$ respectively. Due to the placements of region 1 and 2, the only way to uncover position $(1, -2)$ is by placing the points of both of these regions at the rightmost positions, so at $(1,0)$ and $(1,-5)$. Similarly, the only way to uncover the false position of $x_3$ is by placing the points of region 3 and 4 at the leftmost positions $(5, -5)$ and $(5, 0)$. However, this means the line from region 2 to 3 is a horizontal line from $(1, -5)$ to $(5, -5)$. The false position of $x_2$ lies above this line which means it is not uncovered. Since this choice of points of regions 1 to 4 is the only one that uncovers the false positions of both $x_1$ and $x_3$, it is impossible to uncover all three false positions simultaneously.

        We can easily see that for any choice of two literals, we can uncover both false positions simultaneously by considering the separate cases. To uncover the false positions of both $x_1$ and $x_2$ we can choose for regions 1 and 2 the rightmost position, for region 3 the topmost position and for region 4 any position inside the region. A corresponding (part of a) weakly simple realisation is illustrated in \Cref{afig:clause-gadget-left}, where an orange line indicates that a literal is false and a blue line that it is true.  
        We have already described the case for $x_1$ and $x_3$ (see \Cref{afig:clause-gadget-down}) and the case for $x_2$ and $x_3$ (\Cref{afig:clause-gadget-right}) is symmetric to that of $x_1$ and $x_2$. The important observation is that for any weakly simple realisation, at least one of the variable points cannot be in its false position and must thus be set to true, just like in a disjunction.
        \end{proof}

        \begin{figure}[ht]
            \centering
            \subcaptionbox{\label{afig:clause-gadget-og}}[.5\textwidth]%
                {\includegraphics{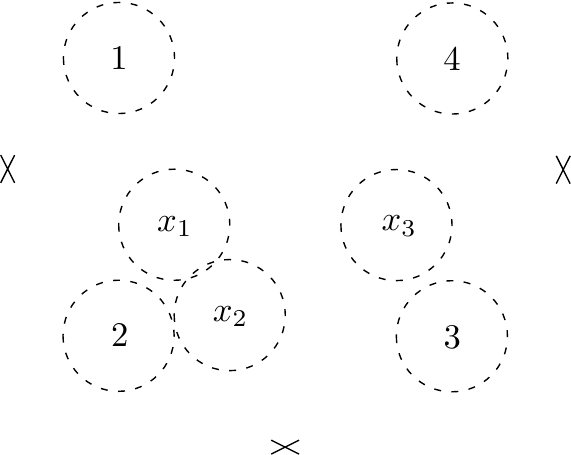}}%
            \subcaptionbox{\label{afig:clause-gadget-left}}[.5\textwidth]%
                {\includegraphics{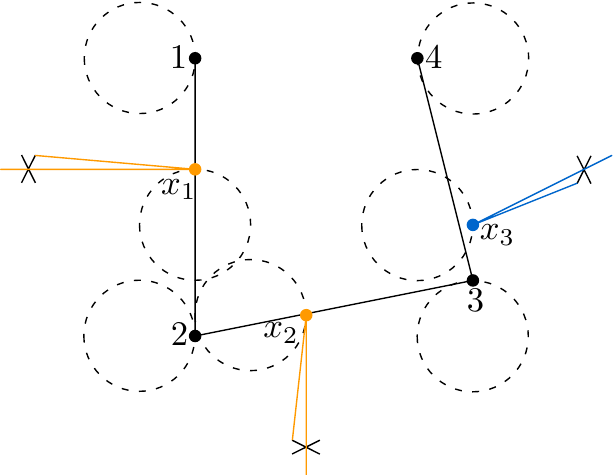}}\\%
            \subcaptionbox{\label{afig:clause-gadget-down}}[.5\textwidth]%
                {\includegraphics{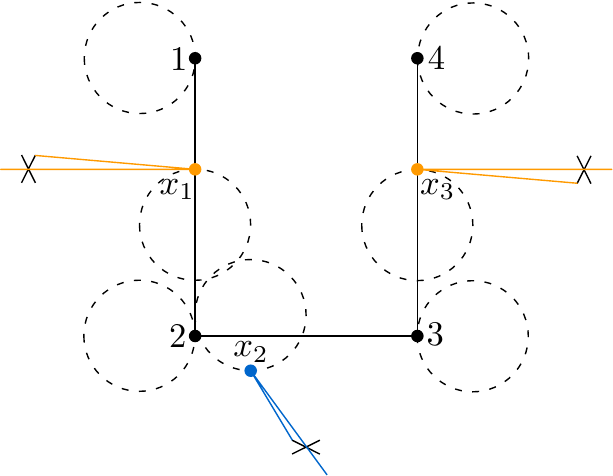}}%
            \subcaptionbox{\label{afig:clause-gadget-right}}[.5\textwidth]%
                {\includegraphics{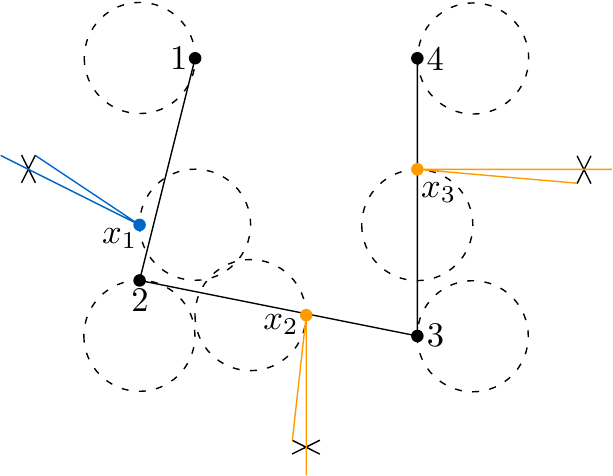}}%
            \caption{\subref{afig:clause-gadget-og} The clause gadget, with realisations where \subref{afig:clause-gadget-left} $x_3$, \subref{afig:clause-gadget-down} $x_2$, or \subref{afig:clause-gadget-right} $x_1$ is true.}
            \label{afig:clauses}
        \end{figure}
        
        \subparagraph*{Connecting variables and clauses.}
        For the reduction construction we still need a way to connect the variable gadgets to the clause gadgets correctly. For this we look at the shape of the induced variable-clause graphs we get for planar monotone 3SAT. We have seen an example before in \Cref{afig:plan3sat-vb}.
        Clauses are always connected to variables in an either downward or upward facing E-shape. To preserve the planarity of this graph, our reduction should follow this shape to a close enough degree. So the clause gadget is to be placed directly above one of the variable gadgets and the other two variable gadgets connect to it with a bend from the side.  
        
        We first look at the connection from a right variable to a clause. The wire gadget in this case involves three regions and two pivot gadgets as seen in \Cref{afig:wire-og}. Here region 1 is horizontally aligned with the variable gadget of the variable corresponding to the right literal of the clause. Based on \Cref{afig:variables}, we place these variable regions between region 2 and 6, which means depending on the state of the variable gadget our realisation has a horizontal line at the top (orange in \Cref{afig:wire-og}) if the state is false and at the bottom (blue) if the state is true. We denote the position of region 1 as $(0,0)$. Region three is the right literal region of the clause gadget. The corresponding two states of the clause gadget are drawn in blue and orange. The position of this region is determined by the clause gadget and it should not depend on the wire gadget.    
        Let region 3 to be positioned at $(-a, b)$ for some $a,b > 0$. For our wire gadget, we place region 2 at position $(1,b+1)$ and place two pivot gadgets at $(1-\frac{a}{2}, b+1)$ and $(0, 1 + \frac{b}{2})$, oriented appropriately. We stretch the variable-clause graph without changing its validity, so that $a$ and $b$ are large enough that the components of the wire gadget do not overlap.

        \begin{figure}[ht]
            \centering
            \subcaptionbox{\label{afig:wire-og}}[.25\textwidth]%
                {\includegraphics{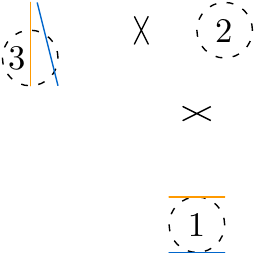}}%
            \subcaptionbox{\label{afig:wire-false}}[.25\textwidth]%
                {\includegraphics{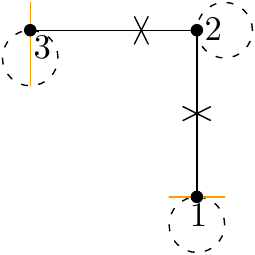}}%
            \subcaptionbox{\label{afig:wire-true-incorrect}}[.25\textwidth]%
                {\includegraphics{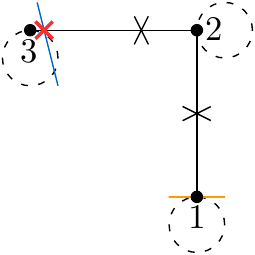}}%
            \subcaptionbox{\label{afig:wire-true}}[.25\textwidth]%
                {\includegraphics{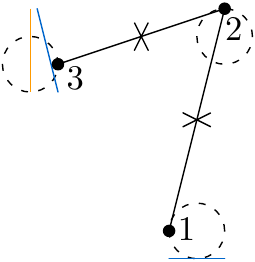}}%
            \caption{\subref{afig:wire-og} The right wire gadget to connect a variable to a clause. Realisations \subref{afig:wire-false} when both are in the false state, \subref{afig:wire-true-incorrect} where the variable is in its false state and the clause in its true state and \subref{afig:wire-true} with the variable gadget in its true state.}
            \label{afig:wires}
        \end{figure}
        
        \begin{lemma}
        If a variable gadget for $x_i$ is in its false state, any weakly simple realisation places $x_i$ in its false position.
        Otherwise $x_i$ can be placed in either its true or false position.
        \end{lemma}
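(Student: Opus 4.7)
The plan is to combine \Cref{alem:gate}, applied to the two pivot gadgets inside the wire, with the characterisation of the variable gadget's two realisations from \Cref{alem:vargadget}. The pivots force the segment from wire region 1 to region 2 through $c_1 = (0, 1+b/2)$ and the segment from region 2 to region 3 through $c_2 = (1-a/2, b+1)$, reducing each segment to a one-parameter family of chords through its pivot centre.

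For the false variable state, the terminating horizontal segment of the variable gadget lies at $y=1$ by \Cref{alem:vargadget}, and I would argue that wire region 1's point is therefore pinned near the top of its disk at $(0,1)$: any lower placement would send the outgoing chord through $c_1$ into the half-plane $y \le 1$, crossing the variable's horizontal segment and breaking simplicity. With region 1 pinned, the chord through $c_1$ degenerates to the vertical line $x=0$, which meets region 2's disk centred at $(1, b+1)$ only at $(0, b+1)$, so region 2's point is pinned there. Propagating the same analysis across $c_2$ pins region 3 to $(-a, b+1)$, which is precisely the false position of the literal as defined by the clause gadget; this rules out the configuration of \Cref{afig:wire-true-incorrect}. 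For the true variable state, the variable's horizontal segment lies at $y=-1$, clear of region 1's disk, so no crossing constraint applies and region 1 may be realised anywhere in its disk subject only to the pivots. I would then exhibit two explicit simple realisations: one that replays the false-case routing and places the literal at its false position, and one that tilts the chord through $c_1$ so that the second pivot carries region 3 to its true position, as in \Cref{afig:wire-true}. Simplicity in both cases is verified by inspecting the three segments involved.

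The main obstacle I anticipate is the crossing argument in the false case, specifically proving that no non-extremal placement of region 1 can avoid a crossing with the variable's horizontal segment at $y=1$. This amounts to inspecting the slopes of chords through $c_1$ meeting region 1's disk and confirming that each such chord exits the half-plane $y \ge 1$ through the disk, forcing the polyline to dip below the variable's horizontal segment. I would also need to confirm that the forced limit realisation, where several points sit on the boundaries of their disks and several segments are tangent to the pivot circles, admits a simple $\eps$-perturbation; this I would handle by exploiting the slack $\eps$ already built into the pivot-gadget construction.
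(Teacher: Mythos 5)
Your proposal follows the paper's proof essentially step for step: in the false state the variable's horizontal segment at $y=1$ pins wire region 1 at $(0,1)$, the first pivot then forces a vertical chord that pins region 2 at $(0,b+1)$, and the second pivot forces a horizontal chord that pins the literal at its topmost (false) position, while in the true state the segment at $y=-1$ leaves region 1 free and you exhibit the tilted realisation through $(-1,0)$, $(1,b+2)$ and the rightmost point of the literal region. The only differences are cosmetic: you are explicit about the $\eps$-perturbation of the touching configuration and about re-exhibiting the false routing in the true case, both of which the paper leaves implicit (and your signs for the literal's positions correct what appear to be typos in the paper's text).
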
 
        \begin{proof}
        If the variable gadget is in its false state, its horizontal line will be at the top of region 1. The only choice of position for point in region 1 that prevents a crossing with this line is the very top of the region $(0, 1)$. Due to the pivot gadget, the line to region 2 must then be perfectly vertical in which case it only touches the region at position $(0,b+1)$, so the point of that region must be placed there. Similarly due to the second pivot gadget the point of region 3 must be placed at position $(a, b+1)$, which is its false position. This realisation can be seen in \Cref{afig:wire-false} and is the only one that does not directly cause intersections. If the variable gadget is in its false state but the clause gadget in its true state then this will always lead to a crossing. An example of this can be seen in \Cref{afig:wire-true-incorrect}. 
        If the variable gadget is in the true state the horizontal line is at the bottom of region 1 which means it does not constrain our choice of point in this region. Most notably, we can place a point at position $(-1, 0)$ which then propagates to the possibility of choosing point $(1, b+2)$ in region 2 and true position $(a+1, b)$ for region 3. This means there will be no crossings on this part of the polyline regardless of the state that the clause gadget is in (see \Cref{afig:wire-true}). Note that in contrast to the false realisation, the described true realisation is not the only one that is possible. Any choice of coordinates that leads to the point of the third region being to the right of the blue line and not causing any crossings is valid. 
        \end{proof}
        
        The wire gadget for connecting the left literal of a clause to its variable gadget is a mirrored version of the right wire gadget. The middle wire gadget is a simplified version needing only one pivot gadget, as this connection does not need to make a bend. All three versions of the wire gadget can be seen in effect in \Cref{afig:simple-example}.

        \subparagraph*{The reduction.}
        A general reduction method should be able to convert any monotone rectilinear layout of a planar monotone 3SAT formula into a sequence of unit disks for which there exists a weakly simple realisation if and only if the formula is satisfiable. 
        
        The general construction works as follows.
        First we place all gadgets as prescribed by the layout obtained from the planar monotone 3SAT instance.
        Next, we connect the gadgets in a planar manner.
        For this, we connect the variables from left to right as in \Cref{afig:simple-example}.
        We connect the gadgets in the top half of the construction by processing the clauses on the outer face (that lie above the $x$-axis) as follows, and connecting the results from left to right. 
        Connect the clause to its right wire (and its pivot gadgets), then recursively process and connect the clauses in the region enclosed by its right and middle wire, then connect the middle wire, then recursively process and connect the clauses in the region enclosed by the middle and left wire, and finally connect the left wire.
        The gadgets in the bottom half are connected in a symmetric manner.
        Due to the nature of our variable gadgets, making wire gadget connections from below automatically results in getting the negated value of the variable, which is in line with the definition of planar monotone 3SAT. Furthermore, the regions of the wire gadgets in the bottom half do not interfere (cause crossings) with those in the top half, since the horizontal line of the variable gadget ensures they remain separated even if their regions overlap.
        
        We give an example for the very simple case where we have only three variables and a single clause as seen in \Cref{afig:planarvcgraph-simple}. After construction we end up with a sequence of regions. In \Cref{afig:simple-example-down} we see a potential weakly simple realisation where the variable $x_2$ is set to true. Another example is shown in \Cref{afig:simple-example-left} where instead the $x_1$ variable gadget is true. It is for this example of course also possible for multiple variables gadgets to be in the true state simultaneously. 
        
        \begin{figure}[ht]
            \centering
            \subcaptionbox{\label{afig:planarvcgraph-simple}}[\textwidth]%
                {\includegraphics{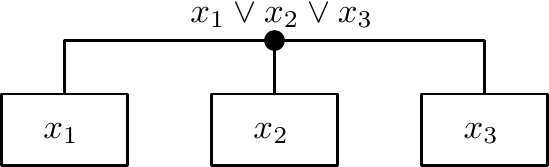}}\\%
            \subcaptionbox{\label{afig:simple-example-down}}[.5\textwidth]%
                {\includegraphics{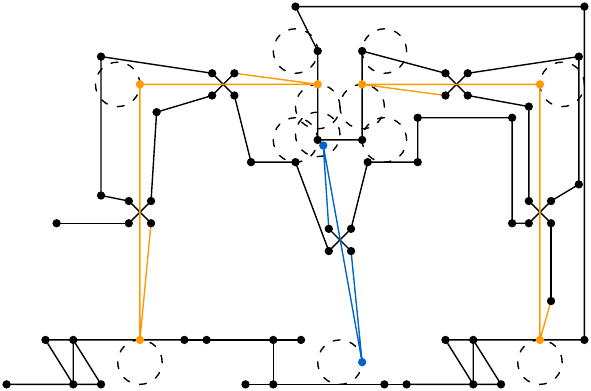}}%
            \subcaptionbox{\label{afig:simple-example-left}}[.5\textwidth]%
                {\includegraphics{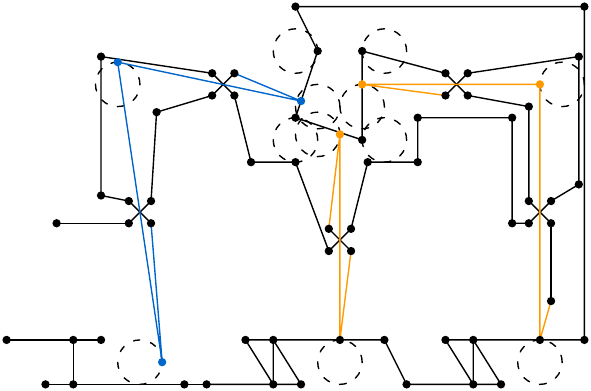}}%
            \caption{Example of the construction for variable-clause graph \subref{afig:planarvcgraph-simple}.
            The weakly simple realisation in \subref{afig:simple-example-down} sets $x_2$ to true and the realisation in \subref{afig:simple-example-left} sets $x_1$ to true.}
            \label{afig:simple-example}
        \end{figure}
        
        A potential issue could be that when we have nested clauses in the variable-clause graph, we have insufficient space to fully draw the clause gadget and its wires in the enclosure defined by the outer clauses. This is remedied by scaling the variable-clause graph, preserving planarity.
        
        Correctness of the reduction follows from correctness of the individual gadgets, for which we already argued when describing them. Furthermore, each gadget and connection uses a constant number of regions and their positions can be determined based on the variable-clause graph, which leads to the reduction being polynomial in size and time. This allows us to conclude the following:
        \begin{theorem}
        Given a sequence of unit disks, the problem of deciding if there exists a weakly simple realisation for this sequence is NP-hard. 
        \end{theorem}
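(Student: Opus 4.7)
The plan is to give a polynomial-time reduction from planar monotone 3SAT, which is NP-complete. Given a monotone 3CNF formula $\varphi$ together with its monotone rectilinear layout $L$, I would assemble the imprecise polyline $R$ gadget-by-gadget as prescribed by $L$: one variable gadget per variable of $\varphi$ placed along the $x$-axis, one clause gadget per clause placed above the $x$-axis (for positive clauses) or below (for negative clauses), and one wire gadget per incidence-graph edge---using the appropriate left, middle, or right variant, with pivot gadgets inserted at the bends. These components would then be stitched into a single polyline by following the recursive traversal order described earlier: handle each outer-face clause in turn, recurse into the clauses nested between successive wires, and connect consecutive variable gadgets left-to-right along the $x$-axis.

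For the size bound, I would argue that each gadget uses $O(1)$ regions and each edge of $L$ contributes a wire with $O(1)$ regions, so the total region count is linear in $|\varphi|$. Coordinates of all regions can be computed in polynomial time from $L$. To rule out spatial conflicts caused by deeply nested clauses, I would first uniformly scale $L$ so that each enclosure formed by nested clauses has enough room for the required clause and wire gadgets; this preserves planarity of $L$ and only multiplies coordinates by a polynomial factor.

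Correctness then reduces to the gadget lemmas already established. For the forward direction, given a satisfying truth assignment, I would set every variable gadget to its true or false state per \Cref{alem:vargadget}, uncover in each clause exactly the false positions of its true literals via the clause lemma, and route each wire accordingly using the wire lemma; \Cref{alem:gate} guarantees that the chosen placement is compatible with every pivot gadget. For the reverse direction, any weakly simple realisation forces each variable gadget into one of exactly two discrete states (\Cref{alem:vargadget}), inducing a truth assignment on the variables of $\varphi$. The wire lemma then implies that any false variable propagates to place its literal at its false position, which is covered in the corresponding clause gadget. Since the clause lemma asserts that at most two false positions can be uncovered simultaneously, at least one literal per clause must instead sit at its true position, so the induced assignment satisfies every clause.

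The step I expect to be the main obstacle is verifying non-interference between gadgets, since weak simplicity is a global property: the individual gadget lemmas each analyse a single gadget in isolation, but the assembled polyline must avoid crossings across the entire construction. Concretely, I would need to verify that after scaling $L$, distinct gadgets live in disjoint bounding regions except along the intended wire corridors, and that the realisations prescribed by the per-gadget lemmas can be chosen simultaneously without any two non-adjacent segments meeting. This is largely a geometric bookkeeping task given the flexibility provided by the pivot and wire gadgets, but it is where the hidden work of the reduction really lies.
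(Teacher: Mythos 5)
Your proposal follows essentially the same route as the paper: a polynomial-time reduction from planar monotone 3SAT that places the variable, clause, wire, and pivot gadgets according to the rectilinear layout, connects them via the recursive outer-face traversal, scales the layout to avoid gadget overlap, and derives correctness from the individual gadget lemmas. The non-interference concern you flag is real but is handled in the paper exactly as you suggest, by scaling and by the separating horizontal line of the variable gadget.
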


        \begin{observation}
        In our construction, if there is a weakly simple realisation, then there is one that uses only the top, right, bottom or leftmost point of each imprecision region. Hence, if we use a different imprecision region that contains these four points, there will still be a weakly simple realisation if there was one initially. And if each region is a subset of the original, then if there is a weakly simple realisation in the second, then there is also one in the original. 
        \end{observation}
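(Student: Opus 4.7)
The plan is to verify the compass-point claim by revisiting each gadget, and then obtain both monotonicity statements as immediate corollaries.

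For the first part, I would show that any satisfying assignment of the underlying 3SAT formula (together with, for each clause, a choice of two literals whose false positions are to be uncovered) admits a canonical weakly simple realisation using only compass points of the unit disks. This reduces to a direct inspection of the placements named in the earlier correctness proofs. In the pivot gadget, region 1 at $(-\eps,2)$ is precisely the rightmost point of the disk centred at $(-1-\eps,2)$, and by \Cref{alem:gate} regions 2--6 are forced onto the line $x=0$ at what are exactly their leftmost or rightmost points; symmetric statements hold for the bottom pivot. In the variable gadget, both states from \Cref{alem:vargadget} place regions 1, 2, 3, 5, 6 at top or bottom compass points (all at $y=\pm1$), and region 4 is unconstrained and can be snapped to any compass point. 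The clause and wire gadget cases are the same kind of check: every coordinate explicitly named in their correctness proofs is one of the four compass points of its unit disk, and the few regions left free can be snapped to compass points without affecting weak simplicity.

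The main obstacle is the freedom left in regions $1''$ and $2''$ of each pivot, whose connecting segment must pass through the pivot centre at an angle determined by its neighbours. I would fix compass-point placements for all other regions first, and handle the $1''/2''$ pairs last, in an order that follows the chain of pivots inside each wire gadget. For each pivot, the neighbours of $1''$ and $2''$ are themselves at compass points of their disks, and the construction positions those disks so that a compass-point choice in $1''$ and $2''$ produces a segment that still passes through the pivot centre and stays clear of the interior of the pivot. This is really a consistency check that the rotational slack built into the pivot (by the choice of $\eps$) is enough to accommodate discretised neighbours, and it follows from the same tangent-line argument used when $\eps$ was originally chosen.

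Once the compass-point claim is established, both corollaries follow at once. For the enlargement direction, start from a compass-point realisation; if the new region contains the four compass points of its corresponding unit disk, then the same vertex sequence realises the new imprecise polyline, and the underlying polyline is unchanged and therefore still weakly simple. For the shrinkage direction, any realisation of the subset sequence is, vertex by vertex, also a realisation of the original by set inclusion, and weak simplicity is a property of the resulting polyline, so it transfers verbatim. No additional geometric argument is required for either implication.
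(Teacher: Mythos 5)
Your proposal is correct and matches the paper's (implicit) justification: the paper states this as an unproved observation, relying on exactly the inspection you carry out, namely that every placement named in the gadget correctness proofs is a top, bottom, left or rightmost point of its disk, and that the two monotonicity claims are immediate set-inclusion consequences of such a compass-point realisation. The only slips are cosmetic: in the pivot gadget, region 4 is forced to its \emph{topmost} point $(0,0)$ rather than a left/rightmost one, and in the variable gadget region 4 is in fact forced to $(5,-1)$, its topmost point, rather than being unconstrained --- but in both cases these are still compass points, so the argument stands.
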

        
        The problem we considered can also be looked at from a different point of view. One could consider the center of each region in the sequence to be a point of an input polyline and the regions themselves denoting how we are allowed to perturb these points. In that case the setting of unit disks that we discussed can be seen as being allowed to move the point unit distance under the $L_2$ norm.  Therefore, it might be interesting to consider the cases for the $L_1$ and $L_\infty$ norms. Under the $L_1$ norm the regions are diamond-shaped. By the above observation the problem remains NP-hard. Furthermore, we can consider the square-shaped regions of the $L_\infty$ norm as the $L_1$ norm if we rotate everything by 45 degrees and scale the gadgets appropriately. This means that the problem stays NP-hard in this case.
    
    \section{Vertical segments as regions} \label{app:vertsegs}
        Next to the previously discussed cases we consider one additional scenario: what if the regions are all vertical segments of the same size? We show that even here, the problem remains NP-hard. We can prove this again using a reduction from planar monotone 3SAT. Since the reduction is very similar to the one from \Cref{app:circles} we shall keep the description concise. In our reduction we will work with vertical segments of length 2 as it makes the coordinates clearer, but this can easily be adapted to unit-length vertical segments.
        
        \subparagraph*{Pivot gadget.}
        Like before, in our reduction we are going to want to force lines to go through certain points by using a pivot gadget. However, the placement of regions used for disks will not work for vertical segments, since in this case it is not possible to place the points for regions 2 to 6 (and $2'$ to $6'$) all on a vertical line. However, due to the vertical segments as regions only allowing a single choice for the x-coordinate of their point, a simpler version of the pivot gadget as seen in \Cref{afig:vert-x-gadget-og} can be constructed. It is the same as the pivot gadget for unit disks, except we replace regions 2, 3, 5 and 6 by two regions at the exact same position (2 and 4 in the figure) and do the same for the bottom component. Similar to before, the pivot gadget requires any line of a weakly simple realisation to pass through the exact center as seen in \Cref{afig:vert-x-gadget-straight}. There is however the difference that now it is not possible to rotate the gadget due to regions being restricted to vertical segments. 
    
            \begin{figure}[ht]
            \centering%
            \subcaptionbox{\label{afig:vert-x-gadget-og}}[.5\textwidth]%
                {\includegraphics{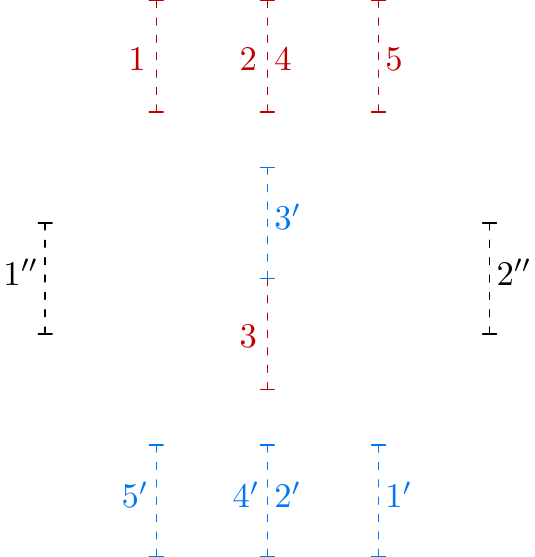}}%
            \subcaptionbox{\label{afig:vert-x-gadget-straight}}[.5\textwidth]%
                {\includegraphics{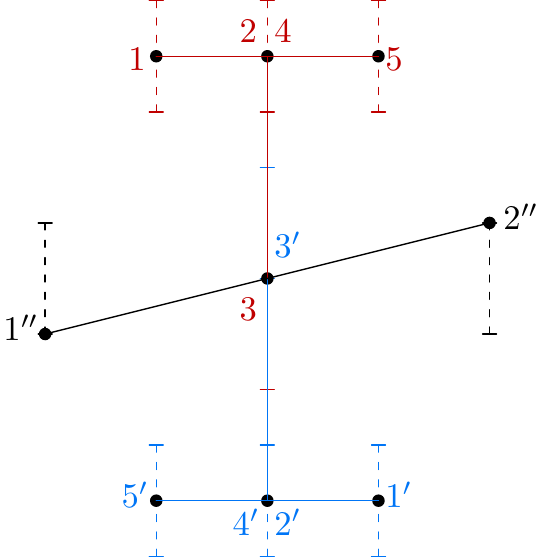}}%
            \caption{\subref{afig:x-gadget-og} The pivot gadget for vertical segments. \subref{afig:x-gadget-straight} A weakly simple realisation.}
            \label{afig:vert-X-gadget}
        \end{figure}
    
        \subparagraph*{Variable gadget.}
        The variable gadget from \Cref{app:circles} only makes use of the topmost and bottommost position of the regions. For this reason the variable gadget can stay exactly the same for the situation with vertical segments. 
        
        \subparagraph*{Clause gadget.}
        The clause gadget from \Cref{app:circles} fundamentally makes use of both the vertical and horizontal extremes of the disk regions. This means that we need to define a new clause gadget that does work for vertical segments. The new clause gadget places the regions at the following positions in order: 
        \[1,\ldots,6 \mapsto (0,0) , (0, -3) , (4, -5) , (6, -5) , (10, -3) , (10, 0)\]
        With two pivot gadgets placed with centers at $(2, -4)$ and $(8, -4)$. The regions for the literals should be placed at $(1, -4)$, $(5, -6)$ and $(9, -4)$ for $x_1$, $x_2$ and $x_3$ respectively. A simplified visualisation of construction can be seen in \Cref{afig:vert-clause-gadget-og}. For the literal regions we force the point to be placed at its topmost position when the literal it corresponds to is false (orange points in the figure). Just like before, we refer to this position as the false position of the literal. Only if the literal is true is the point allowed to be placed at its true position, the bottom of the region (blue points). The exact construction between variable and clause gadget to make this construction work is explained in a later part. In the same vein as before, the clause gadget can only uncover the false positions of at most two literal regions at the same time. In \Cref{afig:vert-clause-gadget-left} we see this happening for $x_2$ and $x_3$, meaning that $x_1$ needs to be true to not cause crossings. In \Cref{afig:vert-clause-gadget-down} this is the case for $x_2$ and a third realisation exists for $x_3$. 
        
        \begin{figure}[ht]
            \centering        
            \subcaptionbox{\label{afig:vert-clause-gadget-og}}[.33\textwidth]%
                {\includegraphics{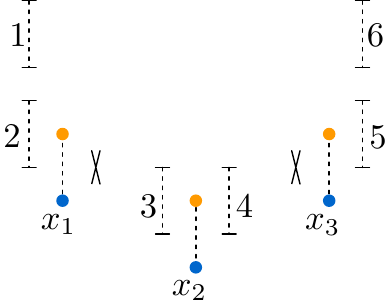}}%
            \subcaptionbox{\label{afig:vert-clause-gadget-left}}[.33\textwidth]%
                {\includegraphics{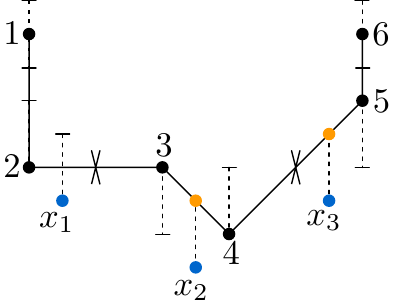}}%
            \subcaptionbox{\label{afig:vert-clause-gadget-down}}[.33\textwidth]%
                {\includegraphics{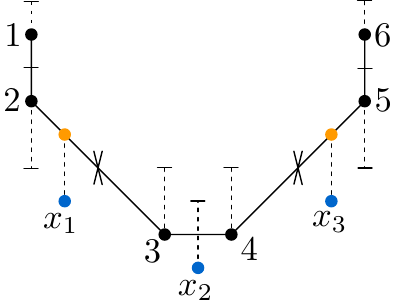}}%
            \caption{\subref{afig:vert-clause-gadget-og} The clause gadget for vertical segments, with realisations where \subref{afig:vert-clause-gadget-left} $x_1$ or \subref{afig:vert-clause-gadget-down} $x_2$ should be true.}
            \label{afig:vert-clause-gadget}
        \end{figure}
        
        \subparagraph*{Wire gadget.}
        Just like before, we need to define a wire gadget to connect literal regions at the clause gadget to the corresponding variable gadgets. The gadget consists of three regions (one of which the literal region) and two pivot gadgets. The wire is visualised in \Cref{afig:vert-wire-gadget} together with the two important weakly simple realisations. Region 1 is again placed at $(0,0)$ horizontally aligned with a variable gadget. Then assuming literal region 3 as positioned at $(-a, b)$ for $a,b > 0$, we get that region 2 must be placed at $(2, b)$ and the two pivot gadgets at $(1, \frac{b}{2})$ and $(\frac{-a}{2}+1, b)$. By placing these pivot gadgets exactly in the middle between two consecutive regions we get an alternating pattern in the solution polyline of the points lying at the very top or at the very bottom of their region, as can be seen in \Cref{afig:vert-wire-false} and \Cref{afig:vert-wire-true}. This idea can also be used to make a connection for the middle literal of the clause gadget where, in contrast to the case with unit disks, we do need an intermediate region and two pivot gadgets this time. 
        
        \begin{figure}[ht]
            \centering
            \subcaptionbox{\label{afig:vert-wire-og}}[.33\textwidth]%
                {\includegraphics{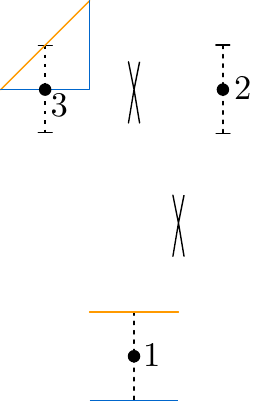}}%
            \subcaptionbox{\label{afig:vert-wire-false}}[.33\textwidth]%
                {\includegraphics{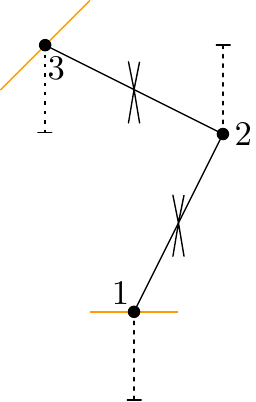}}%
            \subcaptionbox{\label{afig:vert-wire-true}}[.33\textwidth]%
                {\includegraphics{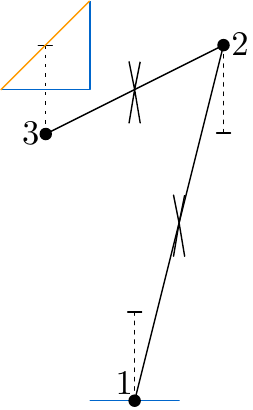}}%
            \caption{\subref{afig:vert-wire-og} The right wire gadget for vertical segments. Weakly simple realisations \subref{afig:vert-wire-false} where both the clause and variable gadget are in its false state and \subref{afig:vert-wire-true} where the variable gadget is in its true state.}
            \label{afig:vert-wire-gadget}
        \end{figure}
        
        \subparagraph*{The reduction.}
        Before looking at the full reduction, we can argue that the setting of vertical segments is in fact in NP.
        Our argument is similar to that used by Silveira et al.~\cite{noncrossingpaths}, where they argue for that a similar problem lies in NP using the fact that any solution can be represented as a solution to a system of linear constraints that can be guessed in nondeterministic polynomial time. 
        Consider a weakly simple realisation, and a corresponding simple perturbation.
        This simple perturbation induces an order type: for any vertex and any edge, record whether that vertex lies on, above, or below the supporting line of the edge.
        This order type induces a polynomial size system of linear constraints, and by picking a sufficiently small perturbation, we can ensure that the weakly-simple realisation lies in the closure of the solution space of this system of constraints.
        We can in nondeterministic polynomial time guess this system of linear constraints, and find a realisation in its closure.
        This realisation is then used as input to one of the algorithms~\cite{recogwspol, detectwspol} to verify that the realisation is weakly simple.
        Since the output of these algorithms depend only on the order type, they will report that the realisation is indeed weakly simple.
        \Cref{alem:segmentNP} follows.
    
        \begin{theorem}\label{alem:segmentNP}
        Given a sequence of unit-length vertical segments, deciding if there exists a weakly simple realisation for these regions is in NP. 
        \end{theorem}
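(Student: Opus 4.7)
The plan is to exhibit a polynomial-size nondeterministic certificate that can be verified in polynomial time by invoking an existing recogniser for weakly simple polylines~\cite{recogwspol,detectwspol}, following the template of Silveira et al.~\cite{noncrossingpaths}. The certificate is the \emph{order type} of a hypothetical simple $\eps$-perturbation $P'$ of a weakly simple realisation $P$: for every vertex index $i$ and every edge index $j$, a label in $\{\text{above}, \text{on}, \text{below}\}$ recording the side of the supporting line of $p_j' p_{j+1}'$ on which $p_i'$ lies. This amounts to $O(n^2)$ labels, which is polynomial in the input size.

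Because each imprecision region is a vertical unit segment, the $x$-coordinate of every vertex $p_i=(x_i,y_i)$ is fixed, and only $y_i$ varies over an interval of length $2$. Every label in the certificate translates into a single linear constraint in the unknowns $y_0,\ldots,y_n$; combined with the box constraints from the segments, this yields a linear feasibility system $\mathcal{S}$ of polynomial size. The verifier solves $\mathcal{S}$ using linear programming, produces a realisation $Q$ lying in the closure of its feasible set, runs the weak-simplicity recogniser of~\cite{recogwspol,detectwspol} on $Q$, and accepts iff the recogniser reports that $Q$ is weakly simple.

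Soundness is immediate, since any accepted realisation is, by definition, a weakly simple realisation of $R$. For completeness, given a weakly simple realisation $P$, I would choose $\eps$ small enough that a simple $\eps$-perturbation $P'$ exists and that every strict orientation of a triple $(p_i, p_j, p_{j+1})$ is preserved in the corresponding triple of $P'$; taking the order type of $P'$ as the witness, $P$ itself lies in the closure of the feasible set of $\mathcal{S}$, certifying that the system is consistent. The main obstacle is arguing that the realisation $Q$ returned from the closure of $\mathcal{S}$ is actually judged weakly simple by the recogniser: this hinges on the fact that the output of the algorithms of~\cite{recogwspol,detectwspol} depends only on the combinatorial order type of the input, so $Q$ and the strictly simple $P'$ necessarily receive the same verdict.
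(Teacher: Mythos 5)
Your proposal is correct and follows essentially the same route as the paper: guess the order type of a simple perturbation, turn it into a polynomial-size linear system (linear precisely because the vertical segments fix all $x$-coordinates), take a realisation in the closure of its feasible set, and verify with a weak-simplicity recogniser whose output depends only on the order type. Your explicit remark that fixing the $x$-coordinates is what makes the orientation constraints linear is a useful clarification, but it does not change the argument.
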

        
        With the gadgets adapted to work for when the regions are vertical segments, the remainder of the reduction stays the same. We convert a monotone rectilinear layout of a formula to an instance of our problem by first constructing all variable gadgets, then all clauses and wires of the top half in order from right to left and finally all clauses and wires of the bottom half from left to right. 
        \begin{theorem}
        Given a sequence of unit-length vertical segments, the problem of deciding if there exists a weakly simple realisation for these regions is NP-complete. 
        \end{theorem}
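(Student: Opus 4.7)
The plan is to combine the NP-membership already established in \Cref{alem:segmentNP} with an NP-hardness reduction from planar monotone 3SAT that closely mirrors the one in \Cref{app:circles}. Given a monotone 3CNF formula $\varphi$ together with a monotone rectilinear layout, I would build an imprecise polyline whose regions are all unit-length vertical segments (scaled to length 2 for convenience, as the authors do) by placing one variable gadget per variable along the $x$-axis, one clause gadget per clause above or below the $x$-axis according to its sign, and connecting literal regions to their variable gadgets via wire gadgets that incorporate pivot gadgets. The construction is carried out exactly as in the unit-disk case: variables are placed left to right, then the clauses on the top half are processed outer-to-inner and connected right to left, and the bottom half is processed symmetrically.

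The correctness argument then reduces to chaining the gadget lemmas. First, the pivot gadget from \Cref{afig:vert-X-gadget} forces the incident edge through its center in any weakly simple realisation. Second, the variable gadget of \Cref{alem:vargadget} admits exactly two weakly simple realisations, which I will identify with the truth values of $x_i$. Third, the new vertical-segment clause gadget together with the two pivot gadgets at its sides propagates, via the wire gadget, the truth value of each incident variable to the position of its literal region, and uncovers at most two of the three false literal positions. Thus any weakly simple realisation of the whole construction induces a truth assignment satisfying every clause, and conversely any satisfying assignment yields a weakly simple realisation by choosing the appropriate state in each variable gadget and routing the wires as in \Cref{afig:vert-wire-false,afig:vert-wire-true}.

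The reduction is clearly polynomial: each gadget uses a constant number of vertical segments, and coordinates are determined by the layout after an initial uniform scaling to guarantee that nested clause gadgets and their wires fit inside the enclosures defined by the outer clauses, and that top-half and bottom-half wires do not interfere (which is automatic because the horizontal line through each variable gadget separates them). Combined with \Cref{alem:segmentNP} for NP-membership, this establishes NP-completeness.

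The main obstacle I anticipate is the global non-interference argument rather than any single gadget. Because vertical segments cannot be rotated, the pivot gadget is axis-aligned, which rigidly constrains the geometry of wires; one must verify that for every orientation of a wire (right, middle, left, and the mirrored bottom versions) the alternating ``top/bottom'' pattern forced by successive pivot gadgets is consistent with both states of its endpoint variable gadget and of its endpoint literal region, and that no two gadgets in distinct subtrees of the recursive layout share a region or create an unavoidable crossing. Once the scaling is chosen large enough and the wire placements of the form $(2, b)$, $(1, b/2)$, $(-a/2+1, b)$ from the wire description are verified pairwise, the correctness of the overall reduction follows by the same recursive outer-face argument used in \Cref{app:circles}.
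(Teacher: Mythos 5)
Your proposal matches the paper's argument: NP-membership is taken from \Cref{alem:segmentNP}, and hardness follows by reusing the planar monotone 3SAT reduction of \Cref{app:circles} with the vertical-segment versions of the pivot, clause, and wire gadgets (the variable gadget unchanged), assembled variables first, then the top-half clauses and wires, then the bottom half. The paper is in fact terser than your write-up, so no gap to report.
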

        
    \section{Discussion}
        We showed that deciding if an imprecise polygon has a weakly simple realisation is NP-hard when each region is a unit disk, unit square, or vertical unit segment.
        By growing each region slightly, the same follows for deciding whether a simple realisation exists.
        Whereas the case of vertical segments is NP-complete, it is unclear whether the problems of \Cref{app:circles} lie in NP, as it is unclear whether polynomially many bits suffice to encode a solution.
        The settings of \Cref{app:circles} lie in the complexity class $\exists\mathbb{R}$, and we wonder if these settings are $\exists \mathbb{R}$-hard.

\bibliography{references}

\end{document}